\newcommand{\cc}{\mbox{\boldmath $c$}}
\newcommand{\argmax}{\mbox{\rm argmax}}
\begin{document}
\title{Response Prediction for Low-Regret Agents}
\author{Saeed Alaei \and 
Ashwinkumar Badanidiyuru \and \\
Mohammad Mahdian \and 
Sadra Yazdanbod}
\institute{Google}

\maketitle

\begin{abstract}
Companies like Google and Microsoft run billions of auctions every day
to sell advertising opportunities. Any change to the rules of these
auctions can have a tremendous effect on the revenue of the company
and the welfare of the advertisers and the users. Therefore, any
change requires careful evaluation of its potential impacts.
Currently, such impacts are often evaluated by running simulations or
small controlled experiments. This, however, misses the important
factor that the advertisers respond to changes. Our goal is to build a
theoretical framework for predicting the actions of an agent (the
advertiser) that is optimizing her actions in an uncertain
environment. We model this problem using a variant of the multi-armed
bandit setting where playing an arm is costly. The cost of each arm
changes over time and is publicly observable. The value of playing an
arm is drawn stochastically from a static distribution and is observed
by the agent and not by us. We, however, observe the actions of the
agent. Our main result is that assuming the agent is playing a
strategy with a regret of at most $f(T)$ within the first $T$ rounds,
we can learn to play the multi-armed bandits game (without observing
the rewards) in such a way that the regret of our selected actions is
at most $O(k^4(f(T)+1)\log(T))$, where $k$ is the number of arms.
\keywords{ad auctions \and advertiser response prediction \and multi-armed bandit \and low regret}
\end{abstract}

\section{Introduction}

Over the last two decades, the online advertising market has emerged
as one of the most important application areas of auctions. Companies
like Google and Microsoft run billions of auctions every day to sell
advertising opportunities worth hundreds of millions of dollars. Rules
of these auctions have undergone frequent change, often prompted by
the release of new features (such as ads with additional site links or
ads taking advantage of re-targeting lists) or by optimizations in the
auction system (such as a new reserve price algorithm or a new
algorithm for estimating click probabilities).  Any such change can
have tremendous impact on the revenue of the company and the welfare
of the advertisers and the users. Therefore, any proposed change to
the auction system goes through a rigorous vetting process to evaluate
its potential impacts and decide, based on the results of the
evaluation and current business priorities, whether the proposal
merits a launch.

Currently, the main tools used for evaluating a proposed launch is
running simulations~\cite{mizuta2000agent} or small controlled
experiments~\cite{tang2010overlapping}. These approaches, however,
miss the important factor that the advertisers respond to
changes. This is evident in the case of simulations, where the bids
advertisers have submitted for the existing auction are used to
simulate the new proposed auction. In the case of controlled
experiments, the trouble is that the treatment often has to be applied
to all or none of advertisers in an auction. This, together with the
fact that advertisers overlap imperfectly on the set of auctions they
participate in, makes it practically impossible to select random
treatment and control groups of advertisers, treat all of the auctions
the treatment set of advertisers participate in while leaving all
auctions that the control group participate in untreated
(See~\cite{backstrom2011network} for a discussion of a very similar
problem in the context of social
networks)\footnote{See~\cite{lang2007finding} for an attempt to solve
  this problem by restricting the experiment to small micro-markets.
  Note that this has the obvious disadvantage of biasing the
  experiment toward a non-representative set of advertisers and
  auctions.}. In practice, experiments are run with a random set of
auctions (typically 1\% or less of all auctions) as the treatment
group. This means that for each advertiser only a very small
percentage of their auctions is treated, leading to a treatment effect
that is well smaller than the noise in the system, and is hence
practically unobservable by the
advertiser.\footnote{See~\cite{chawla2014mechanism} for an interesting
  theoretical treatment of this setting. It turns out that assuming
  that the advertisers are fully rational and react even to a small
  change in the auction, even treating a small percentage of each
  advertiser's auctions is enough to extrapolate their response to a
  full treatment. In practice, however, there is too much noise and
  fluctuation in the system for advertisers to be able to observe and
  respond to a change that, for example, increases their cost per
  click by 10\% in 1\% of their auctions.}

In this paper, our goal is to build a theoretical framework for
predicting advertiser response based on observations about their past
actions. Our model is driven by a few important considerations. First,
the advertisers face an uncertain environment, and optimize their
objective in presence of uncertainty. As in~\cite{NST15}, we capture
this by modeling the advertiser as an agent solving a regret
minimization problem in a multi-armed bandit setting. In our
motivating application, each arm can correspond to an ad slot the agent
can purchase or to a discretized value of the bid the agent
submits. We make no assumption on the type of algorithm the agent is
using except that it has bounded regret. Second, we are concerned with
an environment that is changing, and therefore requires the agent to
respond to this change. We model this by assuming each arm has a cost,
and in each round, the agent is informed about the cost before he has
to choose which arm to play. This is the main point of difference
between our model and the model in~\cite{NST15}, and is an important
element of our model, since without this, to predict which arm an
agent is going to play, it is enough to look at their past history and
select the arm that is played most often. The assumption that the cost
of each arm is observed before the agent picks which arm to play is
not entirely accurate in our motivating application, since advertisers
only learn about the cost of their ad after it is placed. However,
given that in practice costs change continuously over time, the
advertisers can use the cost of each arm in the recent past as a proxy
for its current cost. Therefore, we feel this assumption is a
justified approximation of the real scenario. 

Finally, we model the objective of our prediction problem. In our
model, once the agent decides which arm to play, they receive a reward
from that arm that is drawn stochastically from a static
distribution.\footnote{In our motivating application, the reward can be
  the profit the advertiser makes if the user clicks on their ad and
  makes a purchase, or zero otherwise. In this case, the assumption
  that the reward distribution is static means that the profit per
  conversion and the conversion probability are fixed over time. This
  is not entirely accurate, but is a reasonable approximation of the
  reality, since while these parameters change over time, they tend to
  change at a slow pace.}  This reward is observed by the agent but
not by us. All we observe is the cost of the arms and the arm that the
agent plays. Over time, we would like to be able to ``predict'' which
arm the agent plays.  We need to be careful about the way we capture
this in our model. For example, if two of the arms always have the
same cost and the same reward, the agent's choice between them is
arbitrary and can never be predicted. Also, if an arm has never been
played (e.g., since its cost has been infinity so far), we cannot be
expected to predict the first time it is played. For these reasons, we
evaluate our prediction algorithm by the regret of its actions. Our
main result is an algorithm that by observing the actions of the agent
learns to play the multi-armed bandit problem with a regret that is
close to that of the agent.
Furthermore, we show if the optimal arm, i.e., the arm with highest reward and lowest cost, is unique at every step, the number of predictions of our algorithm that is not exactly the same as the agent actions is upper bounded. Our upper bound depends on the distance  between the optimal arm and the second optimal arm at every step.
%
%

Since we evaluate our algorithm by the regret of its actions, it can be seen as a regret minimization algorithm which is a very well studied subject. The distinguishing point of between our work and previous work in regret minimization is that in our setting the algorithm does not observe the payoffs (not even the payoff of the arm it selects) which is the essential input for regret minimization algorithms in the literature \cite{BC12}.

\section{Related Work}
The closest previous work to this paper is \cite{NST15}, where the authors study a model for learning an agent's valuations based on the agent's responses. Similar to this paper,~\cite{NST15} does not assume that the agent always chooses a myopically optimal action, but assumes that the agent chooses its actions using a no-regret learning algorithm. There are two main differences between the model in~\cite{NST15} and in our paper. The first difference is that \cite{NST15} studies a single parameter setting where each agent reports a single bid, whereas we study a multi-parameter setting where the agent can pick one of many actions and the utility of each action might not be related to the others. Hence as a model one can reduce \cite{NST15} to our model by disretization. Another key difference between the two papers is the metric. The goal of \cite{NST15} is to study sample complexity of computing a set whose Hausdoff distance from the ``rationalizable set'' of valuations is not large. In the current paper the metric is regret of the algorithm with respect to the agent's valuation.
Another related work is~\cite{FKT10}, where the authors study the problem of mimicking an opponent in a 2 player gaming setting when we cannot observe the payoff and the only thing that is observable is the action of the opponent. 

As we discussed in the introduction, our results can be used for bid prediction if the arms correspond to discretized values for the bids the agents submits.
There are a number of papers \cite{XG13,CDE07,PK11,BG11} on this subject that model different objectives and behaviors of the agents. 
However, most of them rely on an estimation of the agent's private values so they can be used for bid prediction.
Also, most of these papers ignore the fact that the agents often faces an uncertain environment that they learn over time, and the optimizations happen in presence of uncertainty. 

Another line of related work is on designing mechanisms for agents that follow no-regret strategies. For example~\cite{Braverman} studies an auction design problem in such a model.

Outside of computer science there is also a rich literature in Economics studying inference in auctions under equilibrium assumptions. A survey of this literature can be found in \cite{AH07}. This approach has been used to study a wide variety of settings such as arbitrary normal form games \cite{KS15}, static first-price auctions \cite{GPV00}, extension to risk-verse bidders \cite{GPV09,CGPV03}, sequential auctions \cite{BP03} and sponsored search auctions \cite{V07,SD10}. 


\section{Model}
In this section we describe our theoretical framework for predicting advertiser response based on observations about their past actions.
In our model, an agent (representing an advertiser in our motivating
application) plays a multi-armed bandit game with $k$ arms. In each of the
time steps $t=1, 2, \ldots$, each arm $i$ has a cost $c^t_i$. These
costs can be different in each time step, but they are observed by the agent and by us at the beginning of each time step. The reward (also called the value) of
playing arm $i$ in any time step is drawn from a distribution
$\mathcal{D}_i$ with expected value $0 \leq v_i\leq 1$. The agent does not know
$\mathcal{D}_i$ or $v_i$, but after playing an arm, privately observes its reward. In our motivating application, each arm can correspond to a bid value the advertiser can submit. The reward of an arm is the value the advertiser receives (e.g., by selling a product through the click-through on their ad), and the cost corresponds to the amount they have to pay for their ad. In this context, the assumptions that the costs are observed by the advertiser as well as the auctioneer, that the distribution $\mathcal{D}_i$ is unknown, and that the reward is observed by the advertiser but not by the auctioneer all make sense.

As the costs are different at each time step, the optimal action ~
$o_t = \argmax_{i\in [k]} \{v_i-c^t_i\}$ for the agent can also be
different. Since the agent does not know $v_i$'s, she might play an
arm that is not necessarily optimal.  Let $a_t$ be the arm that the
agent picks at step $t$. As a result of this choice, the agent accrues
a regret of $ar_t=(v_{o_t}-c^t_{o_t})-(v_{a_t}-c^t_{a_t})$ at time
step $t$. We assume that the agent uses an arbitrary bounded-regret
strategy, i.e., her total regret $\sum_{t=1}^T ar_t$ up to time $T$ is 
bounded by a function $f(T)$ for each time step $T$.

The goal is to design an algorithm that in each time step $t$, given
the history of the agent actions up to this time step (i.e., the costs $\cc^1,
\ldots, \cc^{t-1}$ and the actions $a_1, \ldots, a_{t-1}$ of the
agent, but not the rewards the agent has received) and the costs
$\cc^t$ of the arms in this time step, picks an arm $p_t$. Because of
this choice, the algorithm accrues a regret of
$pr_t=(v_{o_t}-c^t_{o_t})-(v_{p_t}-c^t_{p_t})$ at step $t$. Our metric
for the algorithm's performance is measured by the total regret it
achieves as compared to the regret of the agent.

Our main result is that 
there exists an algorithm with a regret bound of $O(k^4 (f(T)+1) \log(T))$.

\section{Prediction Algorithm}
In this section, we describe our prediction algorithm.
A key step in designing the algorithm is our assumption that the agent's regret is bounded by $f(t)$ for each time step $t$. This allows us to define a set of values for the agent that are consistent with their actions so far and their regret bound. A value vector $v$ is consistent with the actions up to time $t$ if there exists a regret vector $r$ such that:
\begin{equation}
\label{eqn:cv}
\begin{array}{ll}
 v_{a_\ell}-c^\ell_{a_\ell} \geq v_{i}-c^\ell_i-r_\ell & \forall \ell \in [t-1],\forall i\in [k]\\
	     \sum_{j\leq \ell} r_j \leq f(\ell) & \forall \ell \in [t-1]
\end{array}
\end{equation}
We denote the set of consistent values at time $t$ with $CV(t)$. 
Note that for every $v\in CV(t)$, the optimal arm is $\argmax_i\{ v_i -c^t_i\}$.
The main idea of the algorithm is to pick an arm which is the optimal arm for the largest portion of $CV(t)$. 
Formally, for each arm $i$ define $w_i$ as the probability that $i$ is the optimal arm for a vector $v\in CV(t)$ chosen uniformly at random. 
At every time step $t$, our algorithm picks the arm $i$ with the highest $w_i$.

\begin{algorithm}[h]
\SetAlgoNoLine
 $CV(0) = \{v\ |\  0\leq v_i\leq 1, \forall i\}$\;
 \For{ each time step $t$}{
     $c^t \leftarrow \textrm{costs of playing arms at time step $t$}$\;
     $CV(t) \gets  \textrm{the set of consistent values at time step $t$}$ \;
     $w_i:=Pr_{v\sim Unif(CV(t))} [ v_{i}-c^t_i\geq v_{j}-c^t_{j},\ \forall j]$\;
     $p_t\gets \arg\max_{i}\ w_i$\;
 }

%
\caption{Prediction Algorithm}\label{main-alg}
\end{algorithm}
The time complexity of our algorithm at each time step is equivalent to the time complexity of computing the volume of polynomially many $k$ dimensional polytopes.
%

\subsection{Regret Analysis}
In this section we analyze the regret bound of Algorithm \ref{main-alg}.
In the main theorem of this section, Theorem \ref{regretb}, we show Algorithm \ref{main-alg}'s predictions for the first $T$ rounds has a regret bound of $O(k^4 (f(T)+1)\ln(T))$.
Note that after each action by the agent, the set of consistent values should satisfy the following new  constraints.
$$ \forall j\not= a_t , v_{a_t}-v_j+r_t \geq c_{a_t}-c_j $$
Lemma \ref{costlb} will be used later in the proof of Theorem \ref{regretb} to show that each time the prediction of the algorithm is wrong (meaning  $a_t\not=p_t$) the set $CV(t)$ shrinks. Before stating the lemma, we need to define the following notations:
$$U_{ij}(t)=max_{v \in CV(t)} \{v_i-v_j\}$$
$$L_{ij}(t)=min_{v \in CV(t)} \{v_i-v_j\}$$
\begin{lemma}\label{costlb}
If the predicted arm $p_t$ is not the arm $a_t$ that is played by the agent, then 
$$c^t_{a_t}-c^t_{p_t}\geq L_{a_tp_t}(t)+ \frac{1}{8k} (U_{a_tp_t}(t)-L_{a_tp_t}(t)).$$
\end{lemma}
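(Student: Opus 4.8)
The plan is to argue by contradiction about the pair $a:=a_t$, $p:=p_t$ (the only property of $a_t$ that enters is $a_t\ne p_t$). Write $L:=L_{ap}(t)$, $U:=U_{ap}(t)$, $D:=U-L$, $q:=c^t_a-c^t_p$, and let $Z:=v_a-v_p$ denote this linear functional evaluated at a uniformly random $v\in CV(t)$; its support is $[L,U]$. The cases $q\notin[L,U]$ are immediate: if $q<L$ then no $v\in CV(t)$ makes $p$ optimal, so $w_p=0$, impossible since $p=\argmax_i w_i$ and the weights sum to at least $1$; and $q>U$ already gives $q-L>D\ge D/(8k)$. So assume $L\le q\le U$ and, for contradiction, $q-L<\frac{1}{8k}D$.

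I would first record two elementary facts. If $p$ is optimal at $v$ then $v_p-c^t_p\ge v_a-c^t_a$, i.e.\ $Z\le q$; hence $w_p\le\Pr[Z\le q]$, and symmetrically $w_a\le\Pr[Z\ge q]$. Also $w_p\ge1/k$, since the nonnegative weights sum to at least $1$ and $w_p$ is the largest. The law of $Z$ is the pushforward of the uniform measure on the polytope $CV(t)$ under a linear map, so by Brunn--Minkowski its density has a concave $(k-1)$-st root on $[L,U]$; the resulting tail estimate reads $\Pr[Z\le q]\le 1-(1-(q-L)/D)^k<1-(1-1/(8k))^k$, which is below $1/k$ when $k$ is at most a small absolute constant (and tends to $1-e^{-1/8}>0$ as $k\to\infty$). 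This already contradicts $w_p\ge1/k$ for small $k$, but for larger $k$ a genuinely sharper argument is needed.

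For general $k$ I would bound $w_a$ from below by a volume-preserving reflection. Let $\sigma$ be the involution that fixes every coordinate except $a,p$ and sends $(v_a,v_p)\mapsto(v_p+q,\,v_a-q)$ — reflection in the hyperplane $\{v_a-v_p=q\}$; it preserves Lebesgue measure. A direct computation shows that at $\sigma(v)$ the utility of $a$ equals $v_p-c^t_p$, the utility of $p$ equals $v_a-c^t_a$, and every other utility is unchanged, so $p$ optimal at $v$ forces $a$ optimal at $\sigma(v)$. Hence $\sigma$ carries $R_p:=\{v\in CV(t):p\text{ optimal}\}$ into $\{a\text{ optimal}\}$; since $R_p\subseteq\{L\le Z\le q\}$ and $\sigma$ reflects $Z$ about $q$, on $R_p$ the map $\sigma$ raises $v_a$ and lowers $v_p$ by at most $q-L<D/(8k)$ and lands $\sigma(R_p)$ in the thin slab $\{q\le Z\le 2q-L\}$. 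Writing $R_a:=\{v\in CV(t):a\text{ optimal}\}$ and using that $\sigma(R_p)\cap CV(t)\subseteq R_a\cap\{Z\le 2q-L\}$, this gives $w_a\ge\mathrm{vol}\big(\sigma(R_p)\cap CV(t)\big)+\mathrm{vol}\big(R_a\cap\{Z>2q-L\}\big)$.

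The crux, and the main obstacle, is to show the ``leakage'' $\mathrm{vol}(R_p)-\mathrm{vol}(\sigma(R_p)\cap CV(t))$ is strictly smaller than $\mathrm{vol}(R_a\cap\{Z>2q-L\})$, which would force $w_a>w_p$ and contradict $p=\argmax_i w_i$. Leakage can only occur inside a slab of width $q-L<D/(8k)$ of the faces of $CV(t)$ coming from $v_a\le1$, from $v_p\ge0$, or from a previous round's consistency constraint; Brunn--Minkowski applied to the convex sets $R_p$ and $R_a$ bounds the mass in such a slab by an $O\!\big(k(q-L)/(\text{relevant range})\big)$ fraction, and likewise shows the slab $\{q\le Z\le 2q-L\}$ removes only a small fraction of $w_a$, so that $\mathrm{vol}(R_a\cap\{Z>2q-L\})$ is a constant fraction of $w_a$. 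Balancing these two estimates is where the constant $8$ and the factor $k$ get consumed, and I expect the delicate point to be controlling the leakage through the previous-round consistency constraints — where one is additionally free to re-choose the regret vector $r$ to absorb part of the displacement of $v_a$ — rather than through the box constraints alone.
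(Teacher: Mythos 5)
Your reflection $\sigma$ is set up correctly (the utilities of $a$ and $p$ do swap at $\sigma(v)$, all other utilities are unchanged, and $\sigma$ preserves Lebesgue measure), but the proposal is not a proof: the step you yourself call ``the crux and the main obstacle'' --- that the leakage $\mathrm{vol}(R_p)-\mathrm{vol}(\sigma(R_p)\cap CV(t))$ is strictly dominated by $\mathrm{vol}(R_a\cap\{Z>2q-L\})$ --- is essentially the entire content of the lemma, and you give no argument for it. The difficulty is substantive, not technical. The map $\sigma$ raises $v_a$ and lowers $v_p$ by up to $q-L$, and the facets of $CV(t)$ inherited from earlier rounds (constraints such as $v_{a_\ell}-c^\ell_{a_\ell}\ge v_a-c^\ell_a-r_\ell$, or lower bounds on $v_p$ from rounds where $p$ was played) are generally oblique to the direction $Z=v_a-v_p$; nothing rules out $R_p$ hugging such a facet and losing a constant fraction of its mass under $\sigma$, rather than the $O\!\left(k(q-L)/D\right)$ fraction you posit. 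Symmetrically, $R_a$ could concentrate inside the slab $\{q\le Z\le 2q-L\}$, so the assertion that $\mathrm{vol}(R_a\cap\{Z>2q-L\})$ is a constant fraction of $w_a$ is also unsupported; and re-choosing $r$ cannot help, since $CV(t)$ is already the projection over all feasible $r$. Your first argument --- the tail bound $\Pr[Z\le q]\le 1-(1-(q-L)/D)^k$ from concavity of $g^{1/(k-1)}$, played against $w_p\ge 1/k$ --- is sound but, as you acknowledge, contradicts the hypothesis only when $1/k>1-e^{-1/8}\approx 0.118$, i.e.\ for $k$ up to a small constant. So the lemma is established only for bounded $k$.

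For comparison, the paper takes a different and much shorter route: it asserts in Claim~\ref{gconcave} that the slice function $g(z)$ itself is concave on $[L,U]$ (not merely its $(k-1)$-st root), and from that derives $w_p\le G(c)\le 2g(c)(c-L)$ and $\sum_{i\ne p}w_i\ge G(U)-G(c)\ge\tfrac{1}{2}g(c)(U-c)$, whence some arm has weight exceeding $w_p$ by pigeonhole. Your instinct that Brunn--Minkowski only yields concavity of $g^{1/(k-1)}$ is the mathematically correct one --- cross-sectional volumes of a convex polytope need not be concave --- and substituting that weaker, true statement into the paper's computation degrades its bounds by factors exponential in $k$, which is evidently why you sought a different mechanism. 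But as written, your general-$k$ argument is a program, not a proof: the quantitative leakage estimate that would close it is exactly what remains to be shown.
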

\begin{proof}
Let us simplify the notations by omitting some of the indices: $a=a_t$, $p=p_t$, $L=L_{a_tp_t}$, $U=U_{a_tp_t}$, and $c=c_{a_t}-c_{p_t}$.
Suppose
\begin{equation}\label{lm1:eq1}
c < L+ \frac{1}{8k} (U-L) 
\end{equation}
for the sake of contradiction. 
Using Inequality~\eqref{lm1:eq1}, we show an arm $i$ exists such that its weight $w_i$ is higher than the weight of the arm $p$. 
Therefore, we have a contradiction because the algorithm chooses an arm $p$ such that $w_p=\max_{i\in[k]}~ w_i$. Lemma \ref{costlb} follows from this contradiction.

Let us define $G(z)=Pr_{v\sim Unif(CV(t))}[v_a-v_p<z]$ and $g(z)=\frac{dG(z)}{dz}$. We first show $g(z)$ is concave and non-negative in $[L_{ap},U_{ap}]$.

\begin{claim}
\label{gconcave}
$g(z)$ is concave and non-negative in $[L_{ap},U_{ap}]$.
\end{claim}
\begin{proof}
For simplicity and without loss of generality we suppose $CV(t)$ is full dimensional.
Following the definition, $G(z)$ is the probability that a randomly drawn point from $CV(t)$ is in the half space $v_a-v_p<z$.
In other words,	 $G(z)$ is ratio of the volume of intersection of $CV(t)$ and the half space $v_a-v_p<z$ over the volume of $CV(t)$, i.e., 
$$G(z)=\frac{\mbox{Vol}(CV(t)\cap \{v :~ v_a-v_p<z\})}{\mbox{Vol}(CV(t))}.$$ 
Now it is easy to see that the derivative of $G(z)$, $g(z)$, is the surface area of the intersection of the hyperplane $v_a-v_p=z$ and $CV(t)$. Therefore, the claim follows due to convexity of $CV(t)$.
\end{proof}

Considering Inequality \eqref{lm1:eq1}, the following claim proves an upper bound on the weight $w_p$ of arm $p$ and the next claim (Claim \ref{lm1:wi}) shows a lower bound on the sum of weights of all arms except arm $p$, i.e, $\sum_{i\not=p} w_i$. These claims will lead to the contradiction we need.

\begin{claim}\label{lm1:wp}
$w_p \leq 2 g(c)(c-L)$.
\end{claim}
\begin{proof}
Note that
\begin{equation}\label{lm1:eq2}
w_p \leq G(c)
\end{equation}
because we have $w_p=Pr_{v\sim Unif(CV(t))}[\forall j, v_p-c_p\geq v_j-c_j]$ and so
$$w_p \leq Pr_{v\sim Unif(CV(t))}[v_p-c_p\geq v_a-c_a]=G(c).$$
It suffices for the proof to show $g(x)\leq 2 g(c)$, $\forall x\in [L,c]$ because $G(c)=\int_{L}^{c} g(x) dx $. 
By Claim \ref{gconcave} we know that $g$ is a non-negative and concave function in $[L,U]$. Therefore, we have 
\[
\begin{array}{ll}
\forall x\in [L,c], ~~~ g(x) & \leq g(c)-\gamma(c-x)
\end{array}
\]
where $\gamma$ is the derivative of $g$ at point $c$. By concavity of $g$, we have
$\gamma \geq \frac{g(U)-g(c)}{U-c}.$ Therefore, for every $x\in [L,c]$, we have
\begin{eqnarray*}
g(x) & \leq& g(c)-\frac{g(U)-g(c)}{U-c}(c-x)\\
& \leq& g(c)+g(c)\cdot\frac{c-x}{U-c}\\
& \leq& 2g(c) 
\end{eqnarray*}
where the second inequality follows from the non-negativity of $g(U)$, and the last inequality holds because by Inequality \eqref{lm1:eq1}, $c-L \leq U - c$, and therefore for every $x\in [L,c]$, $\frac{c-x}{U-c}\leq 1$.
\end{proof}

\begin{claim}\label{lm1:wi}
$\sum_{i: i\not= p} w_i \geq  \frac{g(c)}{2}(U-c).$
 \end{claim}
\begin{proof}
Note that $\sum_i w_i=1$. Therefore, by Inequality \eqref{lm1:eq2}, we have
\begin{equation}
\label{eqn:clm3}
\sum_{i: i \not= p } w_i = 1-w_p  \geq 1-G(c) = G(U)-G(c).
\end{equation}
Since $g$ is a non-negative concave function on $[L,U]$, we have
$$\forall x\in [c,U], ~ g(x) \geq g(c) + \frac{g(U)-g(c)}{U-c}(x-c)$$
Therefore,
\begin{eqnarray*}
G(U)-G(c)  &=& \int_{c}^{U} g(x) dx \\
&\geq&  \int_{c}^{U} \left(g(c)+\frac{g(U)-g(c)}{U-c}\right)(x-c)dx\\
&=& \frac{g(c)+g(U)}{2}(U-c)\\
&\ge&\frac{g(c)}{2}(U-c).
\end{eqnarray*}

This, together with Inequality~\eqref{eqn:clm3} complete the proof of Claim~\ref{lm1:wi}.
\end{proof}

Now we show a contradiction using Claim \ref{lm1:wp}, Claim \ref{lm1:wi} and Equation \eqref{lm1:eq1}. 
Note that $g(c)>0$ and $U-c>\frac{U-L}{2}$ by Claim \ref{lm1:wp} and Inequality \eqref{lm1:eq1}, respectively. Therefore,
$$
w_p  
\leq   2 g(c)(c-L)
\leq  \frac{g(c)}{4k} (U-L)
< \frac{g(c)}{2k} (U-c),
$$
where the first and the second inequalities follow from Claim \ref{lm1:wp} and Inequality \eqref{lm1:eq1}, respectively.
On the other hand, using Claim \ref{lm1:wi} we know there exists an arm $i$ such that
$$ w_i \geq\frac{g(c)}{2k}(U-c).$$
Therefore, we have 
$w_i\geq \frac{g(c)}{2k}(U-c) > w_p$ which contradicts the way $p$ is selected by Algorithm~\ref{main-alg}.
\end{proof}

\begin{theorem}\label{regretb}
Total regret of Algorithm \ref{main-alg} for the first $T$ rounds is bounded by $O(k^4(f(T)+1)\ln(T))$.
\end{theorem}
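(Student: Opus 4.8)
The plan is to split the algorithm's regret into the agent's regret plus an ``excess'' term that is charged only at the rounds where the prediction is wrong, and then to show that a wrong prediction forces $CV(t)$ to shrink fast enough that this excess stays small. First I would note that the vector $v^{\ast}$ of true values (the $v_i$'s in the model) always lies in $CV(t)$: taking the certificate $r_\ell:=ar_\ell$ in \eqref{eqn:cv}, optimality of $o_\ell$ gives $v^{\ast}_{a_\ell}-c^\ell_{a_\ell}\ge v^{\ast}_i-c^\ell_i-ar_\ell$ for every $i$, while $\sum_{j\le\ell}ar_j\le f(\ell)$ is exactly the agent's regret bound; hence $L_{ij}(t)\le v^{\ast}_i-v^{\ast}_j\le U_{ij}(t)$ for all $t$. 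Writing $pr_t-ar_t=(v^{\ast}_{a_t}-v^{\ast}_{p_t})-(c^t_{a_t}-c^t_{p_t})$, this is $0$ when $p_t=a_t$, and when $p_t\ne a_t$ Lemma~\ref{costlb} together with $v^{\ast}_{a_t}-v^{\ast}_{p_t}\le U_{a_tp_t}(t)$ yields
\[
pr_t-ar_t\ \le\ \Bigl(1-\tfrac1{8k}\Bigr)\,\bigl(U_{a_tp_t}(t)-L_{a_tp_t}(t)\bigr).
\]
Summing over $t$ and using $\sum_t ar_t\le f(T)$ reduces Theorem~\ref{regretb} to bounding $\sum_{t:\,p_t\ne a_t}\Delta_{a_tp_t}(t)$, where $\Delta_{ij}(t):=U_{ij}(t)-L_{ij}(t)$.

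The second step quantifies the shrinkage. The constraints \eqref{eqn:cv} at $\ell=t$ force, for every $v\in CV(t+1)$, that $\max_i(v_i-c^t_i)-(v_{a_t}-c^t_{a_t})\le f(t)$, hence in particular $v_{a_t}-v_{p_t}\ge(c^t_{a_t}-c^t_{p_t})-f(t)$; combined with Lemma~\ref{costlb} this gives, at a wrong-prediction round $t$ on the pair $(a,p)=(a_t,p_t)$,
\[
L_{ap}(t+1)\ \ge\ L_{ap}(t)+\tfrac1{8k}\Delta_{ap}(t)-f(t),\qquad\text{hence}\qquad\Delta_{ap}(t+1)\ \le\ \Bigl(1-\tfrac1{8k}\Bigr)\Delta_{ap}(t)+f(t).
\]
Since $CV(t+1)\subseteq CV(t)$, each $L_{ap}(\cdot)$ is non-decreasing with total increase at most $\Delta_{ap}(0)\le2$; telescoping the left inequality over the wrong-prediction times $t_1<\dots<t_{m_{ap}}$ on pair $(a,p)$ gives $\tfrac1{8k}\sum_i\Delta_{ap}(t_i)-\sum_i f(t_i)\le2$, i.e.\ $\sum_i\Delta_{ap}(t_i)\le 16k+8k\,m_{ap}f(T)$ (using that $f$ is WLOG non-decreasing, $f\le f(T)$). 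Summing over the $\binom k2$ pairs, $\sum_{t:\,p_t\ne a_t}\Delta_{a_tp_t}(t)=O(k^3)+O(k f(T))\cdot(\text{total \# wrong predictions})$, so the theorem follows once the number of wrong predictions is shown to be $O(k\log T)$ per pair (hence $O(k^3\log T)$ overall).

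That bound on the number of wrong predictions is the remaining — and hardest — step. Wrong predictions with $\Delta_{a_tp_t}(t)\le1/T$ contribute at most $1$ in total and may be discarded. For a wrong prediction with $\Delta_{ap}(t)\ge16k f(t)$ the recursion above shows $\Delta_{ap}$ contracts by a factor $1-\tfrac1{16k}$, so there are only $O(k\log T)$ such ``large'' rounds per pair before $\Delta_{ap}$ falls below $1/T$. The obstacle is the ``medium'' rounds with $1/T\le\Delta_{ap}(t)<16kf(t)$, where the contraction argument is vacuous because $f(t)$ may be as large as $\Omega(\sqrt T)$. I expect to handle these by using the full strength of the algorithm's rule — $p_t$ maximizes $w_i$, not merely beats $w_{a_t}$ — so that (as in the proof of Lemma~\ref{costlb}, via $w_{p_t}\le G(c^t_{a_t}-c^t_{p_t})$) a wrong prediction forces $c^t_{a_t}-c^t_{p_t}$ to sit near the $w$-weighted median of $v_{a_t}-v_{p_t}$ over $CV(t)$, which simultaneously keeps $pr_t-ar_t$ small and removes a definite fraction of the mass of $CV(t)$ in that direction; combining this ``mass drop'' across the $\binom k2$ directions with a doubling of the horizon into $O(\log T)$ phases should yield the count.

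This last step is the part I expect to be the main difficulty, and it is presumably where the extra factors of $k$ in $O(k^4(f(T)+1)\log T)$ arise — beyond the $\binom k2$ for the number of arm-pairs and the $1/(8k)$ coming from Lemma~\ref{costlb} — together with the $\log T$ from the phase decomposition and the $f(T)+1$ from the agent's regret budget carried through the telescoping estimate of Step~2.
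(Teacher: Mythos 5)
Your first two steps are sound and match the paper's setup: the decomposition $pr_t=ar_t+er_t$ with $er_t$ charged only at wrong predictions, the observation that $v^*\in CV(t)$ so $L_{ap}(t)\le v^*_a-v^*_p\le U_{ap}(t)$, the bound $er_t\le(1-\frac1{8k})\Delta_{a_tp_t}(t)$ via Lemma~\ref{costlb}, and the split over the $k^2$ pairs $(a_t,p_t)$ are all correct and are essentially what the paper does. The one-step shrinkage recursion $\Delta_{ap}(t+1)\le(1-\frac1{8k})\Delta_{ap}(t)+f(t)$ is also correct (each $r_\ell\ge0$, hence $r_t\le f(t)$).

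The gap is Step 3, and it is not merely unfinished --- the claim you reduce to is false. You need the number of wrong predictions per pair to be $O(k\log T)$, but it can be $\Theta(T)$. Take $k=2$, $v^*_1=v^*_2=\frac12$, $c^t_1=c^t_2=0$ for all $t$, and $f\equiv1$. Every play has zero agent regret, so an adversarial agent may simply simulate your deterministic algorithm and play the arm it does \emph{not} predict at every round: all $T$ predictions are wrong, $er_t=0$ throughout (consistent with the theorem), yet your residual term $8k\,m_{ap}f(T)=\Theta(kT)$ and your proposed count bound both fail. The underlying problem is that your telescoping pays an additive $f(t)$ at \emph{every} wrong round, so you are forced to multiply the budget by the (uncontrollable) number of mistakes. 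The paper avoids counting mistakes entirely: for a fixed pair $(\alpha,\beta)$ with wrong-prediction times $t_1<\dots<t_l$, it takes the single witness $\hat v\in CV(t_i)$ achieving $\hat v_\alpha-\hat v_\beta=L_{\alpha\beta}(t_i)$ and observes that the \emph{total} loss this one point has accrued over all earlier rounds is at most $f(t_i-1)$; combining this with Lemma~\ref{costlb} and an induction yields Claim~\ref{lm2:clm1}, $(v^*_\alpha-v^*_\beta)-L(t_i)\le\lambda(f(t_i)+1)/i$, i.e.\ the budget $f$ is amortized across the mispredictions so the per-round excess decays like $1/i$ and sums to $\lambda(f(T)+1)H(T)$ no matter how many mispredictions occur. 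To repair your argument you would need to replace the per-round $+f(t)$ by such an amortized accounting; the median/mass-drop heuristic you sketch for the ``medium'' rounds does not address the counterexample above, where $CV(t)$ is symmetric in the two arms and no weighted-median argument can force a correct prediction.
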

\begin{proof}
To prove the theorem, we show that 
\begin{equation}
\label{eqn:goal}
\sum_{t\leq T } pr_t \leq f(T)+k^2\lambda H(T)(f(T)+1)
\end{equation}
for $\lambda >2+\frac{1}{1-\delta(1 -\ln(\delta))}$ and $\delta=1-\frac{1}{8k}$. Here $H(T)$ denotes the harmonic series.
Let $v^*$ denote the actual value vector of the arms. By the definition of regret we have
\begin{eqnarray*}
pr_t 
& = &(v^*_{o_t}-c^t_{o_t}) - ( v^*_{p_t} - c^t _{p_t}) \\
& = &((v^*_{o_t}-c^t_{o_t}) - (v^*_{a_t}-c^t_{a_t})) + ((v^*_{a_t}-c^t_{a_t}) - ( v^*_{p_t} - c^t _{p_t}))\\ 
& = &ar_t + ((v^*_{a_t}-c^t_{a_t}) - ( v^*_{p_t} - c^t _{p_t}))
\end{eqnarray*}
Let us define $er_t=\max(0,(v^*_{a_t}-c^t_{a_t}) - ( v^*_{p_t} - c^t _{p_t}))$.
Therefore, 
$$\sum_{t\leq T} pr_t \leq \sum_{t\leq T} ar_t + \sum_{t\leq T} er_t \leq f(T)+\sum_{t\leq T} er_t.$$ 
Therefore, to prove Inequality~\eqref{eqn:goal}, it is enough to show $ \sum_{t \leq T} er_t\leq k^2\lambda H(T)(f(T)+1)$.
We define $B_{\alpha\beta}(T)=\{t :~ t\leq T \textrm{ and } (a_t,p_t)=(\alpha,\beta)\}$. Note that we have
\begin{eqnarray}
\label{sumer}
\sum_{t\leq T} er_t & = &
\sum_{\alpha,\beta} \sum_{t\in B_{\alpha\beta}(T)} er_t\nonumber\\
& \leq& k^2  \cdot\max_{\alpha,\beta} \{\sum_{t\in B_{\alpha\beta}(T)} er_t\}.
\end{eqnarray}
Therefore, to prove Inequality~\eqref{eqn:goal}, it is enough to  show that for every $\alpha, \beta$,
\[
\sum_{t\in B_{\alpha\beta}(T)} er_t \leq \lambda H(T)(f(T)+1).
\]
Let us fix $\alpha$ and $\beta$. 
Suppose $l=|B_{\alpha\beta}(T)|$ and $B_{\alpha\beta}(T)=\{t_1,\ldots,t_{l}\}$ where $t_1 < \cdots < t_{l}$.
We only consider cases where $\alpha\not=\beta$ because 
$\forall \alpha, \sum_{t\in B_{\alpha\alpha}}er_t=0$.
Therefore, using Lemma \ref{costlb} we know $L(t_i) \leq c^{t_i}_\alpha-c^{t_i}_\beta $. That gives
\[
\begin{array}{lll}
er_{t_i} & = \max(0, (v^*_{\alpha}-v^*_{\beta})-(c^{t_i}_{\alpha}-c^{t_i}_{\beta}))\\ 
& \leq \max(0, (v^*_{\alpha}-v^*_{\beta}) - L(t_i)  )
\end{array}
\]
In following claim we show $(v^*_{\alpha}-v^*_{\beta}) - L(t_i)$ is bounded by $\frac{\lambda (f(t_i)+1)}{i}$.
\begin{claim}\label{lm2:clm1}
For every $t_i\in B_{\alpha\beta}(T)$, we have
$$(v_\alpha^* - v^*_{\beta}) - L(t_i) \leq \frac{\lambda (f(t_i)+1)}{i}.$$
\end{claim}
\begin{proof}
The proof is by contradiction. Suppose there is a $t_i$ such that 
\begin{equation}\label{hypo1}
(v^*_\alpha-v^*_\beta) - L(t_i)>\frac{\lambda (f(t_i)+1)}{i}.
\end{equation}
Let $t_i$ be the smallest such $t_i$. Therefore,
\begin{equation}\label{hypo2}
\forall j < i, \ (v_\alpha^* - v^*_{\beta})- L(t_j) \leq \frac{\lambda (f(t_j)+1)}{j}.
\end{equation}
Let $\hat{v}\in CV(t_i)$ be a point that minimizes $v_\alpha-v_\beta$, i.e., $\hat{v}_\alpha-\hat{v}_\beta=L(t_i)$.
Note that we have $i>1$ because the values are bounded by 1.
Let us recall the definition of $CV(t_i)$ here. A vector $v$ is in 
$CV(t_i)$ if $\exists r\in \mathbb{R}^T$ such that:
\[
\begin{array}{ll}
 \forall t \in [t_i-1]\forall j &:~v_{a_t}-c^t_{a_t} \geq v_{j}-c^t_j-r_t \\
\forall t \in [t_i-1]&:	   ~  \sum_{h\leq t} r_h \leq f(t) 
	    
\end{array}
\]
This can be written as:
\[
\begin{array}{ll}
\forall t \in [t_i-1]\forall j &:~ (v_{j}-c^t_j)-(v_{a_t}-c^t_{a_t}) \leq r_t \\
\forall t \in [t_i-1]&:~ \sum_{h\leq t} r_h \leq f(t)
\end{array}
\]
Since $\hat{v}\in CV(t_i)$, we have
$$ \sum_{t< t_i} \max(0,(\hat{v}_{p_t} - c^{t}_{p_t})- (\hat{v}_{a_t} - c^{t}_{a_t}))\leq \sum_{t< t_i} r_t \leq  f(t_i-1) $$
Note that $B_{\alpha\beta}(t_i-1)\subset [t_i-1]$. Therefore, we get
$$\sum_{t_j\in B_{\alpha\beta}(t_i-1)}  \max(0,(\hat{v}_{\beta} - c^{t_j}_{\beta})- (\hat{v}_{\alpha} - c^{t_j}_{\alpha}) ) \leq f(t_i-1).$$
Note that we can write 
$(\hat{v}_{\beta} - c^{t_j}_{\beta})- (\hat{v}_{\alpha} - c^{t_j}_{\alpha})$ as
 $$ ((v^*_\alpha - v^*_\beta) - L(t_i))  -( (v^*_\alpha - v^*_\beta) - (c^{t_j}_\alpha - c^{t_j}_\beta) )$$
because $\hat{v}_\alpha-\hat{v}_\beta=L(t_i)$. 
If we combine the above equations we get

\begin{eqnarray}
\label{eqn:9}
f(t_i-1) 
& \geq&
\sum_{j  <i} \max(0,((v^*_\alpha - v^*_\beta) - L(t_i))  -( (v^*_\alpha - v^*_\beta) - (c^{t_j}_\alpha - c^{t_j}_\beta) ))\nonumber\\
& \geq&
\sum_{j < i} \max(0,\frac{\lambda (f(t_i)+1)}{i} -( (v^*_\alpha - v^*_\beta) - (c^{t_j}_\alpha - c^{t_j}_\beta) )) 
\end{eqnarray}
where the second inequality follows from Inequality \eqref{hypo1}.
On the other hand, we have
\begin{eqnarray}
\label{eqn:10}
(v^*_\alpha - v^*_\beta) - (c^{t_j}_\alpha - c^{t_j}_\beta) 
&\leq&
(v^*_\alpha - v^*_\beta) -\left( (1-\frac{1}{8k}) L(t_j)+ \frac{1}{8k} U(t_j)\right) \nonumber\\
& \leq& 
(1-\frac{1}{8k}) ( (v^*_\alpha - v^*_\beta) -L(t)),
\end{eqnarray}
where the first inequality follows from Lemma~\ref{costlb} and the second inequality follows from the fact that $U(t_j)\geq v^*_\alpha - v^*_\beta$.
Inequalities~\eqref{eqn:9} and \eqref{eqn:10} imply:

\begin{eqnarray}
\label{lm1:f}
f(t_i-1) 
& \geq&
\sum_{j <i} \max(0,\frac{\lambda (f(t_i)+1)}{i} -((1-\frac{1}{8k}) ( (v^*_\alpha - v^*_\beta) -L(t_j)))
\end{eqnarray}

%
%
%
Recall $\delta=1-\frac{1}{8k}$. If we apply Equation \eqref{hypo2} into Equation \eqref{lm1:f} we get:

\begin{eqnarray*}
f(t_i-1) 
& \geq&
\sum_{j < i} \max\left(0,\frac{\lambda (f(t_i)+1)}{i} -\delta\lambda \frac{f(t_j)+1}{j}\right)\\
& \geq&
\sum_{\lfloor \delta i \rfloor\leq j < i} \max\left(0,\frac{\lambda (f(t_i)+1)}{i} -\delta  \lambda \frac{f(t_j)+1}{j}\right)\\
& \geq&
\sum_{\lfloor \delta i \rfloor\leq j < i} \frac{\lambda (f(t_i)+1)}{i} -\delta\lambda \frac{(f(t_j)+1)}{j}\\
& \geq&
\sum_{\lfloor \delta i \rfloor\leq j < i} \lambda (f(t_i)+1) (\frac{1}{i} -\delta \frac{1}{j}),
\end{eqnarray*}
where the last inequality follows from the fact that $f$ is monotone and increasing.
With some straightforward calculations on the above we get:
$$ 1 \geq \lambda(1-\delta(1 +\sum_{\lfloor \delta i \rfloor \leq j<i} \frac{1}{j}))$$
It is easy to see $\sum_{\lfloor \delta i \rfloor \leq j< i} \frac{1}{j} \leq \ln(\frac{1}{\delta})$ since $\frac{7}{8}\leq \delta < 1$. Therefore,
$$ 1 \geq \lambda((1-\delta) - \delta \ln(\frac{1}{\delta}))$$
which is a contradiction because $\lambda> \frac{1}{(1-\delta) - \delta \ln(\frac{1}{\delta})}$ and $(1-\delta) - \delta \ln(\frac{1}{\delta})>0$. The claim follows from this contradiction.
\end{proof}
By Claim~\ref{lm2:clm1},
\begin{eqnarray*}
\sum_{t_i \in B_{\alpha\beta}(T)} er_{t_i} 
& \leq&
\sum_{t_i \in B_{\alpha\beta}(T)} \frac{\lambda (f(t_i)+1)}{i}\\
& \leq&
\lambda (f(T)+1) \sum_{i\leq |l|} \frac{1}{i} \\
& \leq&
\lambda (f(T)+1)H(l) \leq \lambda (f(T)+1)H(T),
\end{eqnarray*}
which completes the proof of Theorem~\ref{regretb}.
\end{proof}

\subsection{Bounding the number of wrong predictions}
Note that predicting the exact arm an advertiser would choose is not always feasible. 
If there is more than one optimal arm, finding which one the advertiser would choose is not possible. 
Therefore, we need an assumption that the optimal arm is unique in every time step.

The following theorem is a corollary of Theorem \ref{regretb}. It bounds the number of wrong predictions of Algorithm 1. In this theorem, the utility of an arm is defined as the value of the arm minus the cost of playing it.
\begin{theorem}
If the utility of the optimal arm is higher than the utility of other arms by $\delta$ for every time step, then the number of mistakes is bounded by $\frac{k^4(f(T)+1)\log(T)+f(T)}{\delta}$.
\end{theorem}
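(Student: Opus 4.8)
The plan is to obtain this as a direct corollary of Theorem~\ref{regretb}, by observing that in any round where the algorithm makes a mistake, at least one of the two \emph{nonnegative} quantities $pr_t$ or $ar_t$ must be at least $\delta$, and then charging the mistakes against the two regret bounds we already have in hand.

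First I would fix a round $t$ with $p_t \neq a_t$. Since we assume the optimal arm $o_t$ is unique at every step, the arms $p_t$ and $a_t$ cannot both equal $o_t$, so either $p_t \neq o_t$ or $a_t \neq o_t$. In the first case the gap hypothesis gives $pr_t = (v^*_{o_t}-c^t_{o_t}) - (v^*_{p_t}-c^t_{p_t}) \geq \delta$, and in the second case it gives $ar_t = (v^*_{o_t}-c^t_{o_t}) - (v^*_{a_t}-c^t_{a_t}) \geq \delta$. In either case some ``bad event'' costs at least $\delta$ in round $t$.

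Next I would partition the mistake rounds into $S_1 = \{t \le T : p_t \neq a_t \text{ and } p_t \neq o_t\}$ and $S_2 = \{t \le T : p_t \neq a_t \text{ and } p_t = o_t\}$; every mistake round lies in exactly one of these, and $t \in S_2$ forces $a_t \neq o_t$. Then $|S_1|\,\delta \leq \sum_{t \in S_1} pr_t \leq \sum_{t \leq T} pr_t$, where the last inequality uses $pr_t \geq 0$ for all $t$ (optimality of $o_t$), and this sum is $O(k^4(f(T)+1)\ln T)$ by Theorem~\ref{regretb}. Likewise $|S_2|\,\delta \leq \sum_{t \in S_2} ar_t \leq \sum_{t \leq T} ar_t \leq f(T)$, using $ar_t \geq 0$ together with the agent's regret bound. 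Adding these and dividing by $\delta$ bounds the total number of mistakes $|S_1| + |S_2|$ by $\frac{k^4(f(T)+1)\log T + f(T)}{\delta}$, after absorbing the constant from Theorem~\ref{regretb}, which is exactly the claim.

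There is no serious obstacle here: the argument is a short accounting step on top of Theorem~\ref{regretb}. The only points that need care are invoking uniqueness of $o_t$ to justify the case split, and noting that $pr_t$ and $ar_t$ are nonnegative so that summing over the subset $S_1$ (resp.\ $S_2$) is dominated by the full regret sum already bounded by Theorem~\ref{regretb} (resp.\ by $f(T)$).
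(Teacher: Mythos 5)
Your argument is correct and is essentially identical to the paper's proof: the paper also splits the mistake rounds into those where $p_t=o_t$ (charged against the agent's regret bound $f(T)$) and those where $p_t\neq o_t$ (charged against the algorithm's regret bound from Theorem~\ref{regretb}), which are exactly your sets $S_2$ and $S_1$. Your version is slightly more explicit about nonnegativity of $pr_t$ and $ar_t$ and about invoking uniqueness of the optimal arm, but the decomposition and charging are the same.
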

\begin{proof}
Let $m_o(t)$ be the number of wrong predictions in which the algorithm chooses the optimal arm, i.e., $p_t=o_t$. Note that in such time steps the agent has a regret of at least $\delta$. Therefore, the overall regret of the agent is lower bounded by $m_o(t)\delta$, and so $m_o(t)\leq \frac{f(t)}{ \delta }$.

Let $m_a(t)$ be the number of wrong predictions in which the algorithm does not choose the optimal arm. In such time steps the algorithm has a regret of at least $\delta$. Therefore, the overall regret of the algorithm is at least $m_a(t)\delta$. Using Theorem \ref{regretb}, we get 
$$m_a(t) \leq \frac{k^4(f(T)+1)\log(T)}{\delta}$$.

The total number of wrong predictions up to time step $t$ is $m_o(t)+m_a(t)\leq \frac{f(t)}{ \delta } + \frac{k^4(f(T)+1)\log(T)}{\delta}$.
\end{proof}

\section{Lower bound}

In this section, we show a lower bound on the prediction regret that holds even when the regret of the agent is zero, that is, $f(T)=0$.
We prove that there is no algorithm that can predict the agent's actions with a regret bound lower than $\frac{k}{4}$, even when $f(T)=0$.

\begin{theorem}
Given any algorithm $\mathcal{A}$, there exists a sequence of costs in which we have $\sum_{t\leq k/2} pr_t \geq \frac{k}{4}$.
\end{theorem}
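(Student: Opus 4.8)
The plan is to exhibit a single \emph{fixed} cost sequence of length $\lfloor k/2\rfloor$ on which any prediction algorithm is forced to incur regret at least $1$ in every round, while the agent can still be made a genuine zero-regret player; the only freedom the adversary needs is in the unobserved value vector, which it may choose after watching the algorithm's predictions.

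Partition the $k$ arms into disjoint pairs $\{1,2\},\{3,4\},\ldots$ In round $t=1,\ldots,\lfloor k/2\rfloor$, make only the $t$-th pair ``active'': set $c^t_{2t-1}=c^t_{2t}=0$ and $c^t_i=2$ for every other arm $i$. These costs are committed in advance and constitute the ``sequence of costs'' promised by the theorem. Now simulate the (deterministic) algorithm on this sequence: once $p_1$ is determined, assign the active pair the values $\{0,1\}$ with the value $1$ going to whichever active arm the algorithm did \emph{not} predict (and to an arbitrary arm of the pair if $p_1$ falls outside it); this pins down the agent's action $a_1$, the sole extra information the algorithm receives; then $p_2$ is determined, and repeat. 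Proceeding round by round fixes a value vector $v\in[0,1]^k$.

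Two verifications complete the argument. First, $v$ makes the agent zero-regret: in round $t$ the value-$1$ arm of the active pair has utility $1$, the value-$0$ active arm has utility $0$, and every inactive arm has utility $v_i-2\le-1$, so $o_t$ is unique, the agent plays it, $ar_t=0$, and $f(T)=0$ as required. Second, by construction $p_t\ne o_t$ for all $t$, and $pr_t=1-(v_{p_t}-c^t_{p_t})\ge1$ because $v_{p_t}-c^t_{p_t}\le0$ in both sub-cases (the other active arm gives $0$, an inactive arm gives at most $-1$); hence $\sum_{t\le k/2}pr_t\ge\lfloor k/2\rfloor\ge k/4$ for $k\ge2$ (in fact $\ge k/2$). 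The same cost sequence also defeats randomized algorithms: conditioned on the history, the round-$t$ prediction is a distribution, and since the two fresh active arms carry no information from past actions, one of them is predicted with probability at most $1/2$; making that arm optimal yields $\mathbb{E}[pr_t\mid\text{history}]\ge1/2$, so $\mathbb{E}[\sum_{t\le k/2}pr_t]\ge k/4$ and some realization attains the bound. I expect the only delicate point to be the quantifier order — costs committed before the algorithm runs, values chosen adaptively afterwards — together with checking that this adaptive choice of values is still realizable by a zero-regret agent, which is precisely the first verification above.
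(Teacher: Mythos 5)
Your proof is correct and uses essentially the same construction as the paper: disjoint pairs of arms, one pair made cheap per round with all others priced out, so the algorithm faces two fresh indistinguishable arms each step. In fact your write-up is more careful than the paper's: the paper fixes the value vector in advance (even arms worth $1$, odd worth $0$) and asserts the algorithm ``cannot do better than choosing at random,'' whereas you make the adversary explicit — choosing values adaptively against a deterministic algorithm (yielding the stronger bound $k/2$) and invoking the probability-at-most-$1/2$ averaging argument for randomized ones — and you also verify that the resulting agent behavior is realizable with zero regret, a point the paper leaves implicit.
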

\begin{proof}
For simplicity suppose $k$ is even.
Consider the following sequence of cost vectors. 
\[
\begin{array}{ll}
c^1& =(0,0,H,H,\ldots,H,H)\\
c^2& =(H,H,0,0,H,\ldots,H)\\
\vdots\\
c^{k/2}& =(H,H,\ldots,H,H,0,0)
\end{array}
\]
where $H$ is any constant bigger than 1.
Formally, $c^t=(c^t_1,\ldots,c^t_k)$ where
\begin{equation} \label{cd}
c^t_i=
 \begin{cases} 
      0 & i\in \{2t,2t-1\} \\
      H & \text{otherwise}
   \end{cases}
\end{equation}
Note that at each time step $t$, the algorithm has no information about arms $2t$ and $2t-1$.
Therefore, the algorithm cannot do better than choosing at random. 
If we set the rewards for arms as follows 
\begin{equation} \label{vs}
v^*_i=
 \begin{cases} 
      1 & \text{$i$ is even} \\
      0 & \text{$i$ is odd}
   \end{cases}
\end{equation}
then the algorithm has a regret of $\frac{1}{2}$ at every step. Therefore, the total regret will be at least $\frac{k}{4}$
\end{proof}

\section{Conclusion}

In this paper, we studied a multi-armed bandits setting where in each step, a cost for playing each arm is announced to the agent. We proved that if we observe an agent that achieves a regret of at most $f(T)$, then even without observing any rewards, we can learn to play with a regret of at most $O(k^4 (f(T)+1) \log(T))$, where $k$ is the number of arms.

We used this model to capture applications like ad auctions, where the goal is to understand and predict the behavior of an advertiser with unknown utility and unobserved rewards. 

There are several problems that are left open. The most natural open question is to find the best regret bound achievable in our setting. The only lower bound we know is $O(k)$ in the case that $f(T)=0$. Also, the broader question of predicting an selfish agent's actions in a dynamic environment without observing her rewards is open in more complicated settings.

\bibliographystyle{splncs04}
\bibliography{Volume_Algorithm}

\begin{thebibliography}{10}
\providecommand{\url}[1]{\texttt{#1}}
\providecommand{\urlprefix}{URL }
\providecommand{\doi}[1]{https://doi.org/#1}

\bibitem{SD10}
Athey, S., Nekipelov, D.: {A structural model of sponsored search advertising
  auctions}. Sixth ad auctions workshop  (2010)

\bibitem{AH07}
Athey, S., Haile, P.A.: {Nonparametric Approaches to Auctions}. In: {Handbook
  of Econometrics}. Handbook of Econometrics, Elsevier (2007)

\bibitem{backstrom2011network}
Backstrom, L., Kleinberg, J.: Network bucket testing. In: Proceedings of the
  20th international conference on World wide web. pp. 615--624. ACM (2011)

\bibitem{Braverman}
Braverman, M., Mao, J., Schneider, J., Weinberg, M.: Selling to a no-regret
  buyer. In: Proceedings of the 2018 ACM Conference on Economics and
  Computation. pp. 523--538 (2018)

\bibitem{BG11}
Broder, A., Gabrilovich, E., Josifovski, V., Mavromatis, G., Smola, A.: Bid
  generation for advanced match in sponsored search. In: Proceedings of the
  Fourth ACM International Conference on Web Search and Data Mining. pp.
  515--524. New York, NY, USA (2011)

\bibitem{BC12}
Bubeck, S., Cesa{-}Bianchi, N.: Regret analysis of stochastic and nonstochastic
  multi-armed bandit problems. CoRR  \textbf{abs/1204.5721} (2012)

\bibitem{CGPV03}
Campo, S., Guerre, E., Perrigne, I., Vuong, Q.: {Semiparametric Estimation of
  First-price Auctions with Risk Averse Bidders}. Working papers, Centre de
  Recherche en Economie et Statistique (2003)

\bibitem{CDE07}
Cary, M., Das, A., Edelman, B., Giotis, I., Heimerl, K., Karlin, A.R., Mathieu,
  C., Schwarz, M.: Greedy bidding strategies for keyword auctions. In:
  Proceedings of the 8th ACM Conference on Electronic Commerce. pp. 262--271.
  EC '07, ACM, New York, NY, USA (2007). \doi{10.1145/1250910.1250949}

\bibitem{chawla2014mechanism}
Chawla, S., Hartline, J., Nekipelov, D.: Mechanism design for data science. In:
  Proceedings of the fifteenth ACM conference on Economics and computation. pp.
  711--712. ACM (2014)

\bibitem{FKT10}
Feldman, M., Kalai, A., Tennenholtz, M.: Playing games without observing
  payoffs. In: ICS. pp. 106--110 (2010)

\bibitem{GPV00}
Guerre, E., Perrigne, I., Vuong, Q.: {Optimal Nonparametric Estimation of
  First-Price Auctions}. Econometrica  \textbf{68}(3),  525--574 (May 2000)

\bibitem{GPV09}
Guerre, E., Perrigne, I., Vuong, Q.: {Nonparametric Identification of Risk
  Aversion in First-Price Auctions Under Exclusion Restrictions}. Econometrica
  (2009)

\bibitem{BP03}
Jofre-Bonet, M., Pesendorfer, M.: {Estimation of a Dynamic Auction Game}.
  Econometrica  (2003)

\bibitem{KS15}
Kuleshov, V., Schrijvers, O.: Inverse game theory: Learning utilities in
  succinct games. In: WINE (2015)

\bibitem{lang2007finding}
Lang, K.J., Andersen, R.: Finding dense and isolated submarkets in a sponsored
  search spending graph. In: Proceedings of the sixteenth ACM conference on
  Conference on information and knowledge management. pp. 613--622. ACM (2007)

\bibitem{mizuta2000agent}
Mizuta, H., Steiglitz, K.: Agent-based simulation of dynamic online auctions.
  In: Simulation Conference, 2000. Proceedings. Winter. vol.~2, pp. 1772--1777.
  IEEE (2000)

\bibitem{NST15}
Nekipelov, D., Syrgkanis, V., Tardos, E.: Econometrics for learning agents. In:
  Proceedings of the Sixteenth ACM Conference on Economics and Computation. pp.
  1--18. ACM (2015)

\bibitem{PK11}
Pin, F., Key, P.: Stochastic variability in sponsored search auctions:
  Observations and models. In: Proceedings of the 12th ACM Conference on
  Electronic Commerce. pp. 61--70 (2011)

\bibitem{tang2010overlapping}
Tang, D., Agarwal, A., O'Brien, D., Meyer, M.: Overlapping experiment
  infrastructure: More, better, faster experimentation. In: Proceedings of the
  16th ACM SIGKDD international conference on Knowledge discovery and data
  mining. pp. 17--26. ACM (2010)

\bibitem{V07}
Varian, H.R.: {Position auctions}. International Journal of Industrial
  Organization  (2007)

\bibitem{XG13}
Xu, H., Gao, B., Yang, D., Liu, T.Y.: Predicting advertiser bidding behaviors
  in sponsored search by rationality modeling. In: Proceedings of the 22nd
  international conference on World Wide Web (May 2013)

\end{thebibliography}

\end{document}